\DeclareMathAlphabet{\mathpzc}{OT1}{pzc}{m}{it}
\def\BState{\State\hskip-\ALG@thistlm} \makeatother
\newcommand\be{\begin{equation}}
\newcommand\ee{\end{equation}}
\newtheorem{thm}{Theorem}[section]
\newtheorem{lem}[thm]{Lemma}
\newtheorem{rem}[thm]{Remark}
\newcommand\path{\bm{\pi}}
\DeclarePairedDelimiterX\set[1]\lbrace\rbrace{#1}
\newcommand{\R}{\ensuremath{\mathbb{R}}}
\def\BState{\State\hskip-\ALG@thistlm} \makeatother
\def\blfootnote{\xdef\@thefnmark{}\@footnotetext}
\begin{document}


\title{Fair Allocation in Crowd-Sourced Systems}


\author{Mishal Assif P K$^\dag$, William Kennedy$^\ddag$, Iraj Saniee$^\ddag$  }
\date{%
    $^\dag$ECE, University of Illinois, Urbana-Champaign\\%
    $^\ddag$Math \& Algorithms Group, Bell Labs, Nokia\\[2ex]%
    Date: \today \footnotetext{\hspace{0.4 cm}Work done while author ($\dag$) was an intern at Nokia Bell labs.}
}

\maketitle
\abstract{In this paper, we address the problem of fair sharing of the total value of a crowd-sourced network system between major participants (founders) and minor participants (crowd) using cooperative game theory.  Shapley allocation is regarded as a fair way for computing the shares of all participants in a cooperative game when the values of all possible coalitions could be quantified.
We define a class of value functions for crowd-sourced systems which capture the contributions of the founders and the crowd plausibly and derive closed-form expressions for Shapley allocations to both. These value functions are defined for different scenarios, such as presence of oligopolies or geographic spread of the crowd, taking network effects, including Metcalfe’s law, into account. A key result we obtain is that under quite general conditions, the crowd participants are collectively owed a share between $\frac{1}{2}$ to $\frac{2}{3}$ of the total value of the crowd-sourced system.
We close with an empirical analysis demonstrating consistency of our results with the compensation offered to the crowd participants in some public internet content sharing companies. 
 }

\section{Introduction}
Many existing and emerging online network systems and services are designed to work semi-autonomously using cloud hardware and secure software created by a small group of founders and enabled by mass participation of crowds.  We refer to (the small number of) the former as \emph{major} participants and (the large number of) the latter as \emph{minor} participants.  Without the hardware platform, and the associated reliability and security provided by the software, there would be no platform and service. But without participation of 
the crowd there would be no data or content to enable the service. Recent examples of such platforms include Waze\cite{Waze} and Helium\cite{Helium} and one can count Google and Facebook as older and well-established instances of such services. Recently, cryptographically strong decentralized mechanisms are also added for  higher levels of security in these systems \cite{Helium}. There is a large body of literature in the computer science community on the design and operation of such decentralized services (see \cite{dean09} for an overview) and a smaller but growing number of publications on reliable distributed mechanisms that automatically account for and accumulate rewards for participants \cite{sv19}. 

What has not been discussed much in the said literature concerning crowd-sourced network systems and services is the concept of fair allocation of the service's total value to the crowd participants who by constantly feeding data and information to the system are key to its success. One observes a trend towards rewarding crowd participants via automatically-counted tokens based on such measures as the number of queries, messages or packets each participant processes, independently of the total value that the network system and service as a whole generates.  This particularly obscures the fact that in many such network systems crowd participants additionally provide critical local or even private information free of charge purely as part of participation which is monetized by the service provider, adding to the total value generated by the service.  

Lack of valued-based allocation to participants in large-scale crowd-sourced systems is felt keenly by the general public after over two decades since the emergence of these once-novel enterprises (see, for example, \cite{Lan13}) but interestingly, this topic has not been addressed adequately by the research and networking communities; for some notable exceptions with respect to the narrower Internet Service Provider settlements, see \cite{Ma10,MaRu,Cheung08,Mis10}. 

Motivated by the interest in revisiting some of the structural aspects of crowd-sourced systems by the Web 3.0 Foundation, see \cite{Web3}, and more directly, the new possibilities offered by the enterprise-sourced systems as part of the Industry 4.0, this paper aims to help bridge this gap via a formal methodology for fair allocation of the total value and a framework of what fair allocation could potentially mean for crowd-sourced systems' participants.


The paper is organized as follows. In section \ref{sec2} we describe the role and requirements of value functions that are necessary for computation of the Shapley value in a cooperative games.  Our aim is to make these functions as simple as possible while capturing the key contributions of major and minor participants in crowd-sourced systems.
In section \ref{sec3} we consider various models for a single crowd-sourced system consisting of a major participant or founder, and a large number of minor participants, or crowd. We derive closed-formed expressions for Shapley allocations to both for a general class of value functions under different regimes. A key result here is that under quite general and plausible conditions, crowd participants collectively get a payoff at least $\frac{1}{2}$ of the total value of the crowd-sourced system.

We next consider a broad extension of our methodology to oligopolies of crowed-sourced network systems in section \ref{sec4} whereby distinct collections of single-founder crowd-sourced systems agree to cooperate via pair-wise agreements.  We show similar results hold in terms of fair share of participants.  Interestingly and in contrast to the single crowd-sourced system, the ratio of the fair share of major to minor participants now increases with the number of inter-crowd connections each major participant contributes to. In section \ref{sec5}, we present a model of geographic crowd-source systems, and obtain closed-form expressions for the Shapley allocation to each community which continues to exhibit the $\frac{1}{2}$ to $\frac{2}{3}$ allocation to the crowd participants.   We close with an empirical study of crowd-sourced systems whose public financial statements help estimate the actual revenue share of the crowd which we observe to be consistent with our models' predictions.

\section{Fair Allocation in Cooperative Games}
\label{sec2}
A formal derivation of fair allocation in cooperative games was first introduced by Shapley \cite{shapley} and numerous extensions of it have been considered since.   A good introduction is found in \cite{roth}.  To summarize the main concept, consider the setting where a group of participants $N$ are cooperating towards a common goal and generate a value $\nu(N)$ as a result, and this value needs to be distributed among all the participants in a fair way. Suppose that there is a value function $\nu : 2^{N} \rightarrow \mathbb{R^+}$ that takes any subset $S$ of $N$ as input and outputs the value that would have been generated by the coalition formed by only the participants in $S$, as opposed to all of $N$.  Now let us denote by $\phi_i(\nu)$ the payoff that participant $i$ receives as part of the grand coalition, i.e. when $S = N$.

\emph{Fairness.} There are a few natural properties that a fair allocation scheme must satisfy:
\begin{equation}
\label{eq:shap-ax}
\begin{aligned}
    &1. \ \sum_{i \in N} \phi_i(\nu) = \nu(N) \\
    &2. \ \nu(S \cup \{i\}) = \nu(S \cup \{j\})) \ \text{for all } S \implies \phi_i(\nu) = \phi_j(\nu) \\
    &3. \ \nu(S \cup \{i\}) = \nu(S) \ \text{for all } S \implies \phi_i(\nu) = 0 \\
    &4. \ \phi_i(\nu_1 + \nu_2) = \phi_i(\nu_1) + \phi_i(\nu_2)
\end{aligned}
\end{equation}
Property (3) above \emph{null player} states that any player $i$ that adds no value to any coalition $S$ does not receive any payoff in the grand coalition, while property (2) \emph{symmetry} states that any two players $i$ and $j$ that add the same value to every coalition $S$ should receive the same payoff. Participants are rewarded only based on the value they add and no other factors. Property (1) \emph{Efficiency} states that the total value generated $\nu(N)$ is completely distributed among the participants and property (4) \emph{linearity} says that for two independent games, the fair payoff to each participant must be the sum of the fair payoffs in each independent game.

From these conditions Shapley derives the following expression for the unique payoff $\phi_i(\nu)$ to each participant $i$:

 \begin{align}
     \label{eq:shap-val}
    \phi_i(\nu) &= \sum_{S \subseteq  N-\{i\}} \frac{|S|! ~(|N|-|S|-1)!}{|N|!} ~ (\nu(S \cup \{i\}) -\nu(S)).
  \end{align}
Here the expression in brackets on the extreme right hand side is the marginal value that agent $i$ adds to coalition $S$ and this marginal value is averaged out for all possible $2^{|N|-1}$ coalitions. The payoff $\phi_i(\nu)$ in \ref{eq:shap-val} is known as the \emph{Shapley allocation} to participant $i$.

\section{A Single Crowd-Sourced System}
\label{sec3}
We think of a crowd-sourced system ($CSS$ from here on) as a cooperative game in which one entity (the founder) provides the main infrastructure, which could be physical (hardware) or virtual (software), of a value-generating service that can only succeed if a large number of spatio-temporal agents participate to enable local match of supply and demand for the said infrastructure service.  In this setting, consider a collection of participants denoted by $N=\{g, u_1, u_2, ..., u_n\}$ where $g$ is the founder and $u_i, 1\le i \le n$ are the crowd member identities.  There is much symmetry in this setting which we exploit fully in the sequel.   

It will be assumed that the value $\nu(S)$ of any subset $S$ of $N$ is zero unless it includes the founder, and in the case where the founder is in $S$ its value is equal to a function of the size of the crowd. This is a restatement of the 
notion of a $CSS$ (crowd-sourced system) in that it gives much credit to a founder for 
imagining and initiating a distributed enterprise enabled by a large number of uncoordinated participants who constantly feed the system.  This form of the value function also makes it possible to derive closed-form expressions for the Shapley payoff of all participants in various settings, as we observe below. We use a power of the size of the crowd as its value: a linear function corresponds to standard scaling, the quadratic function has been widely discussed in networking literature as Metcalfe's law \ref{rem:metcalfe}, and higher, super quadratic, powers are included for benchmarking purposes.


\subsection*{Notation}
\begin{flushleft}
\begin{tabular}{ll}
    $g$ & The unique major participant (founder) \\
    $u_i$ & $i$-th minor participant (crowd) \\
    $n$ & Number of minor participants  \\
    $N$ & Set containing founder and \\
    & crowd (grand coalition) \\
    $S$ & A coalition of participants, a subset of $N$ \\
    $|S|$ & Cardinality of $S$ \\
    $\nu(S)$ & Value of the coalition $S$ \\
    $\phi_{u_i}(\nu)$ & Shapley payoff to participant $u_i$ \\
    $K_g$ & Cost incurred by founder per participant \\
    $k_u$ & Cost incurred by each participant \\
    $\rho$ & Constant unit in value functions \\
\end{tabular}
\end{flushleft}

\subsection{Coalition value based on revenue}
\label{sec:sm}
\paragraph{Case 1 (identical participants)}
The first situation we consider is one where all the potential participants are identical. Any coalition that does not contain the major participant $g$ is treated as one with no value, while the value of a coalition including $g$ grows as a power of its size. The motivation for this model comes from Metcalfe's law which states that the value of any network grows proportional to the square of its number of participants. This empirical law, originally devised in the context of ethernet networks \cite{MC40}, seems to also hold remarkably well in many modern situations such as cryptocurrency networks \cite{bitcoinMC} and social networks \cite{socialMC}. The model can be formally written as $N = \{g, u_1, ..., u_n\}$, $\nu: 2^{N} \rightarrow \R$ and
\begin{equation}
\label{eq:smmu-model}
\nu(S) := \begin{cases}
          0 \quad &\text{if} \, g \notin S \\
          \rho|S-\{g\}|^k \quad &\text{if} \, g \in S \\
     \end{cases}
\end{equation}
where $\rho$ is some constant. Metcalfe's law is simply a special case of our model with $k=2$. 

Here we list a few key properties of $\nu(S)$ as defined above and explain why we believe this simple form captures the key aspects of a crowd-sourced system valuation.  We observe that:

1) the founder $g$ plays a critical role in the definition of the value function in that any subset which does not include it has value 0. It is not hard to show that this value function $\nu$ defined in \ref{eq:smmu-model} is supermodular, and the Shapley value's derived in the sequel are stable, see the end of section \ref{sec2}.

2) for a any specific crowd participant, there is no special incentive to join any of the numerous coalitions of size $S$, as the value function only depends on the size of the coalition and not its constitution.  This means the basic assumption in (2) that all coalitions of size $S$ are equivalent is met.

3) the simple form of $\nu$ we assume, linear, quadratic or higher powers, does not need a saturation effect as the more crowd participants, the larger the total value of a coalition.  This is in contrast with saturating systems where (say) the first 100 crowd participants provide higher value than the last 100.

4) we note that in the analysis in the following pages relatively small values of $S$ give us the asymptotic results with respect to the exponent $k$ equals 1 and 2 corresponding to linear and quadratic value functions.

Our first result characterizes the Shapley value associated with each participant as a fraction of the value of the total coalition. We now look at a general computation that will be useful later as well.

\begin{lem}
\label{lem:first}
Consider the coalition game with the set of agents $N = \{g, u_1, ..., u_n\}$ and value $\nu$ given by
\begin{equation*}
\nu(S) := \begin{cases}
          0 \quad &\text{if} \, g \notin S \\
          f\left(|S|\right) \quad &\text{if} \, g \in S \\
     \end{cases}
\end{equation*}
The Shapley values of the game satisfy
\begin{equation*}
    \phi_g(\nu) = \frac{1}{n+1} \sum_{s=0}^n f(s), \quad \phi_{u_i}(\nu) = \frac{\nu(N) - \phi_g(\nu)}{n}.
\end{equation*}
\end{lem}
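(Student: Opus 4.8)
The plan is to obtain $\phi_g(\nu)$ by a direct evaluation of Shapley's formula \eqref{eq:shap-val}, using that a coalition carries value only if it contains $g$, and then to read off $\phi_{u_i}(\nu)$ from the symmetry and efficiency axioms rather than performing a second summation.

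\textbf{Step 1: the founder's share.} In the sum defining $\phi_g(\nu)$ the coalition $S$ ranges over subsets of $N\setminus\{g\}$, so $S$ is a set of crowd members, $\nu(S)=0$, and only the term $\nu(S\cup\{g\})$ survives; by hypothesis this marginal value depends on $S$ only through $|S|$. I would group the sum by $s=|S|$: there are $\binom{n}{s}$ coalitions of that size, each entering with weight $\frac{s!\,(n-s)!}{(n+1)!}$, and the one arithmetical fact that makes everything collapse is $\binom{n}{s}\frac{s!\,(n-s)!}{(n+1)!}=\frac{1}{n+1}$. Hence $\phi_g(\nu)$ is just the unweighted average of the $n+1$ numbers $f(0),\dots,f(n)$, which is the claimed formula. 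Equivalently, and a little more transparently, via the random-order description of the Shapley value: in a uniformly random ordering of $N$ the founder $g$ occupies each of the $n+1$ positions with equal probability and every predecessor of $g$ is a crowd member, so the expected marginal contribution of $g$ is $\frac{1}{n+1}\sum_{s=0}^{n}f(s)$.

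\textbf{Step 2: the crowd's share.} Next I would note that $\nu$ is invariant under relabelling the crowd: for every $S$ and every pair $u_i,u_j$ one has $\nu(S\cup\{u_i\})=\nu(S\cup\{u_j\})$ — both sides are $0$ when $g\notin S$, and both equal $f$ of the same cardinality when $g\in S$. The symmetry property in \eqref{eq:shap-ax} then forces $\phi_{u_1}(\nu)=\dots=\phi_{u_n}(\nu)=:c$, and the efficiency property in \eqref{eq:shap-ax} gives $\phi_g(\nu)+nc=\nu(N)$, so $c=(\nu(N)-\phi_g(\nu))/n$, as asserted.

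\textbf{Main difficulty.} There is no serious obstacle; the lemma is essentially a bookkeeping exercise. The one place to be careful is the combinatorial collapse in Step 1 — correctly pairing each cardinality class with its multiplicity $\binom{n}{s}$ so that the Shapley weights telescope to $\frac{1}{n+1}$, and keeping the index range of the resulting sum consistent with the convention chosen for the argument of $f$ (crowd size versus coalition size). Step 2 is immediate once one uses that the allocation \eqref{eq:shap-val} genuinely satisfies the symmetry and efficiency axioms from which it was derived, so no second Shapley summation is needed.
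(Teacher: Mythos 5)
Your proposal is correct and follows essentially the same route as the paper: a direct evaluation of the Shapley sum for $g$ using the collapse $\binom{n}{s}\frac{s!\,(n-s)!}{(n+1)!}=\frac{1}{n+1}$, and then symmetry plus efficiency to get $\phi_{u_i}(\nu)$ without a second summation. Your caution about the argument of $f$ (crowd size versus coalition size) is well placed, since the paper's own statement and its use in Theorem \ref{th:smmu-res} implicitly adopt the crowd-size convention, but this is a bookkeeping point rather than a gap.
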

\begin{proof}
    See Appendix \ref{app3}.
\end{proof}

\begin{thm}
\label{th:smmu-res}
Consider the coalition game with the set of agents $N = \{g, u_1, ..., u_n\}$ and value $\nu$ given by \eqref{eq:smmu-model}. The associated Shapley values satisfy

\begin{equation}
\label{eq:smmu-res1}
     \phi_g(\nu) = \frac{\rho n^k}{k+1} + O(n^{k-1}), \quad \phi_{u_i}(\nu) = \frac{k \rho n^{k-1}}{k+1} - O(n^{k-2})
\end{equation}
and
\begin{equation}
\label{eq:smmu-res}
    \lim_{n \to \infty} \frac{\phi_g(\nu)}{\nu(N)} = \frac{1}{k+1}, \quad \lim_{n \to \infty} \frac{\sum_{i=1}^n \phi_{u_i}(\nu)}{\nu(N)} = \frac{k}{k+1}.
\end{equation}
\end{thm}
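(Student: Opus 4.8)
The plan is to derive Theorem \ref{th:smmu-res} as a direct specialization of Lemma \ref{lem:first} followed by an elementary power-sum asymptotic. For the model \eqref{eq:smmu-model}, a coalition $S$ containing $g$ has crowd size $|S-\{g\}| = |S|-1$, so the function appearing in Lemma \ref{lem:first} is $f(s) = \rho(s-1)^k$ (the $s=0$ term is immaterial, contributing only $\rho(-1)^k = O(1)$). Lemma \ref{lem:first} then gives
\begin{equation*}
\phi_g(\nu) = \frac{1}{n+1}\sum_{s=0}^{n} f(s) = \frac{\rho}{n+1}\sum_{j=0}^{n-1} j^{k} + O\!\left(\tfrac{1}{n}\right), \qquad \nu(N) = \rho\, n^{k},
\end{equation*}
and $\phi_{u_i}(\nu) = \tfrac{1}{n}\bigl(\nu(N) - \phi_g(\nu)\bigr)$.

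Next I would invoke Faulhaber's formula (equivalently Euler--Maclaurin): $\sum_{j=0}^{m} j^{k}$ is a polynomial in $m$ of degree $k+1$ with leading coefficient $\tfrac{1}{k+1}$, so $\sum_{j=0}^{n-1} j^{k} = \tfrac{n^{k+1}}{k+1} + O(n^{k})$. Dividing by $n+1$ and expanding $\tfrac{n^{k+1}}{n+1} = n^{k} - n^{k-1} + O(n^{k-2})$ yields $\phi_g(\nu) = \tfrac{\rho n^{k}}{k+1} + O(n^{k-1})$, which is the first part of \eqref{eq:smmu-res1}. Substituting this and $\nu(N) = \rho n^{k}$ into $\phi_{u_i}(\nu) = \tfrac{1}{n}(\nu(N)-\phi_g(\nu))$ gives $\phi_{u_i}(\nu) = \tfrac{1}{n}\bigl(\rho n^{k}\tfrac{k}{k+1} + O(n^{k-1})\bigr) = \tfrac{k\rho n^{k-1}}{k+1} + O(n^{k-2})$, completing \eqref{eq:smmu-res1}.

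The limits \eqref{eq:smmu-res} are then immediate: $\tfrac{\phi_g(\nu)}{\nu(N)} = \tfrac{1}{k+1} + O(1/n) \to \tfrac{1}{k+1}$, and since $\sum_{i=1}^{n}\phi_{u_i}(\nu) = \nu(N) - \phi_g(\nu)$ exactly (this is efficiency, axiom (1) in \eqref{eq:shap-ax}, or read off directly from Lemma \ref{lem:first}), $\tfrac{\sum_i \phi_{u_i}(\nu)}{\nu(N)} = 1 - \tfrac{\phi_g(\nu)}{\nu(N)} \to \tfrac{k}{k+1}$. There is no genuine obstacle here; the only step requiring minor care is tracking the error terms through the division by $n+1$ so that the stated $O(n^{k-1})$ and $O(n^{k-2})$ remainders are honest, and observing that for the two limit statements one only needs the leading term of the power sum, so even the crude estimate $\sum_{j=0}^{n-1} j^{k} = \tfrac{n^{k+1}}{k+1}(1+o(1))$ would suffice there.
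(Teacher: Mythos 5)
Your proof is correct and takes essentially the same route as the paper: apply Lemma \ref{lem:first} and then the Faulhaber/Euler--Maclaurin asymptotics $\sum_j j^k = \frac{n^{k+1}}{k+1}+O(n^k)$, with efficiency giving the crowd's share. The only (immaterial) difference is the instantiation of $f$: the paper reads $f$ as a function of the crowd size, so $\phi_g(\nu)=\frac{\rho}{n+1}\sum_{s=0}^{n}s^k$ exactly, whereas your shifted choice $f(s)=\rho(s-1)^k$ differs from this by $O(n^{k-1})$, which the stated error terms and limits absorb.
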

\begin{proof}
Applying the results of Lemma \ref{lem:first} to the value \eqref{eq:smmu-model}, we get
\begin{align*}
    \phi_g(\nu) &= \frac{\rho}{n+1} \sum_{s=0}^n s^k =  \frac{1}{n+1}\frac{ \rho n^{k+1} + O\left(n^{k}\right)}{k+1} \\& \approx \frac{\rho n^k}{k+1} = \frac{1}{k+1} \nu(N) \quad \text{for large $n$,} \\
    n\phi_{u_i}(\nu) &= \nu(N) - \phi_g(\nu) \approx \frac{\rho k}{k+1} \nu(N).
\end{align*}
\end{proof}

\begin{rem}
\label{rem:metcalfe}
    \emph{The results obtained in Theorem \ref{th:smmu-res} tell us that when $k = 1$, i.e. when the power of a network grows proportional to the number of participants, the value of the grand coalition is shared equally between the major participant and the collection of minor participants. In other words,}
    $\phi_g(\nu) \approx \frac{\nu(N)}{2}$ and $\sum_{i=1}^n\phi_{u_i}(\nu) \approx \frac{\nu(N)}{2}$.

    \emph{In the popular case of Metcalfe's law where $k=2$, the theorem says that the major participant should get 1/3 of the value generated while 2/3 should be distributed equally among the minor participants.}
\end{rem}

\paragraph{Case 2 (Non-identical participants)}The assumption that all participants are (nearly) identical might not be hold in many settings.  For example, when some participants contribute significantly more to the value of the system than others due to their influence in a social network. A slightly more refined model for the value would be 

\begin{equation}
\label{eq:smmuu-model}
\nu(S) := \begin{cases}
          0 \quad &\text{if} \, g \notin S \\
          \rho \left( \sum_{i \in S} W_i^{\alpha} \right)^k \quad &\text{if} \, g \in S \\
     \end{cases}
\end{equation}
where $W_i$ refers to some notion of amount of work done by participant $i$, e.g., the number of messages routed by $i$. We will denote by $f_i := \frac{W_i^{\alpha}}{\sum_{j} W_j^{\alpha}}$ the share of total work done by $i$.

\begin{thm}
\label{th:smmuu-res}
Consider the coalition game with the set of agents $N = \{g, u_1, ..., u_n\}$ and value $\nu$ given by \eqref{eq:smmuu-model} where $k=2$. The associated Shapley values satisfy
\begin{equation}
\label{eq:smmuu-res}
   \lim_{n \to \infty} \frac{\phi_g(\nu)}{\nu(N)} = \frac{1}{3} + \sum_i \frac{f_i^2}{6} , \quad \lim_{n \to \infty} \frac{ \phi_{u_i}(\nu)}{\nu(N)} = \frac{2}{3}f_i - \frac{f_i^2}{6}.
\end{equation}
\end{thm}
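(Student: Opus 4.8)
The plan is to compute both Shapley values \emph{exactly} by writing $\nu$ as a nonnegative linear combination of unanimity (carrier) games and invoking linearity, axiom~(4) in \eqref{eq:shap-ax}. Write $x_i := W_i^{\alpha}$ for the crowd member $u_i$, set $T := \sum_{j=1}^n x_j$ (so $f_i = x_i/T$ and $\nu(N) = \rho T^2$), and treat the founder's work as not entering the sum (equivalently $W_g=0$; if instead $W_g$ is a fixed constant it is negligible relative to $T$ as $n\to\infty$, which is exactly why the conclusion is phrased as a limit). Expanding the square,
\[
  \Big(\sum_{i\in S\setminus\{g\}} x_i\Big)^{2} \;=\; \sum_{i} x_i^{2}\,\mathbf{1}[u_i\in S] \;+\; \sum_{i\neq j} x_i x_j\,\mathbf{1}[u_i\in S]\,\mathbf{1}[u_j\in S],
\]
and multiplying by $\rho\,\mathbf{1}[g\in S]$ exhibits $\nu = \rho\sum_i x_i^2\,\nu_i + \rho\sum_{i\neq j} x_i x_j\,\nu_{ij}$, where $\nu_i$ is the unanimity game with carrier $\{g,u_i\}$ and $\nu_{ij}$ the unanimity game with carrier $\{g,u_i,u_j\}$.

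Next I would note that for a unanimity game with carrier $C$ the axioms force the Shapley value directly: all members of $C$ are mutually symmetric, everyone outside $C$ is a null player, and the total is $\nu(N)=1$, so each member of $C$ receives $1/|C|$ and the rest receive $0$. Applying this together with linearity gives, for the founder,
\[
  \phi_g(\nu) \;=\; \tfrac{\rho}{2}\sum_i x_i^{2} \;+\; \tfrac{\rho}{3}\Big(T^{2}-\sum_i x_i^{2}\Big) \;=\; \rho\Big(\tfrac13 T^{2} + \tfrac16\sum_i x_i^{2}\Big),
\]
and for a crowd member $u_j$, after collecting exactly the carriers that contain $u_j$ (namely $\nu_j$ with coefficient $\rho x_j^2$, and $\nu_{j\ell},\nu_{\ell j}$ for $\ell\neq j$),
\[
  \phi_{u_j}(\nu) \;=\; \tfrac{\rho}{2}x_j^{2} \;+\; \tfrac{2\rho}{3}\,x_j\,(T-x_j) \;=\; \rho\,x_j\Big(\tfrac23 T - \tfrac16 x_j\Big).
\]
Dividing by $\nu(N)=\rho T^{2}$ yields $\phi_g(\nu)/\nu(N) = \tfrac13 + \tfrac16\sum_i f_i^{2}$ and $\phi_{u_j}(\nu)/\nu(N) = \tfrac23 f_j - \tfrac16 f_j^{2}$; these are identities for every $n$, so \eqref{eq:smmuu-res} follows at once. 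I would also record the efficiency check $\phi_g(\nu)+\sum_j\phi_{u_j}(\nu)=\nu(N)$, which collapses to $\tfrac13+\tfrac23\sum_j f_j = 1$.

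A fully self-contained alternative, which I would mention as a remark, bypasses unanimity games: use the random-order form $\phi_i(\nu)=\mathbb{E}\big[\nu(P_i\cup\{i\})-\nu(P_i)\big]$ over a uniformly random ordering of $N$, observe that a crowd member contributes nothing unless $g$ already precedes it while the founder's marginal contribution is $\rho\big(\sum_{i\prec g} x_i\big)^{2}$, and evaluate the second moment via $\Pr[u_i\prec g]=\tfrac12$ and $\Pr[u_i\prec g,\ u_j\prec g]=\tfrac13$ for $i\neq j$. Either way there is no genuine obstacle; the only points requiring care are the combinatorial constants $\tfrac12$ and $\tfrac13$, the bookkeeping of the cross terms $x_ix_j$ (each appearing in two ordered pairs), and the modeling convention that the inner sum runs over $S\setminus\{g\}$, which is what makes the asymptotics, rather than an exact identity, the natural statement.
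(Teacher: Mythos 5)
Your proof is correct, but it takes a genuinely different route from the paper. The paper evaluates the defining formula \eqref{eq:shap-val} directly: it writes the marginal contribution of $u_i$ as $W_i^{2\alpha}+\sum_{j\in S}2W_i^{\alpha}W_j^{\alpha}$, swaps the order of summation, and evaluates the resulting binomial sums (the same constants $\tfrac12$ and $\tfrac13$ you obtain appear there as $\sum_s s/(n(n+1))$ and $\sum_s s(s-1)/((n+1)n(n-1))$), then deduces $\phi_g$ from efficiency; the statement is phrased as a limit because the computation is carried along with $O(n^{-1})$ error terms. Your argument instead expands the square to exhibit $\nu$ as a nonnegative combination of unanimity games with carriers $\{g,u_i\}$ and $\{g,u_i,u_j\}$ and invokes linearity plus the symmetry/null-player/efficiency axioms of \eqref{eq:shap-ax} (or, equivalently, the random-order formula with $\Pr[u_i\prec g]=\tfrac12$, $\Pr[u_i,u_j\prec g]=\tfrac13$). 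What this buys: exact closed-form identities for every $n$ (under the convention, which the paper also uses implicitly, that $g$'s own work does not enter the inner sum), so the limits in \eqref{eq:smmuu-res} are immediate, and an easy extension to general $k$ by multinomial expansion, which explains the remark following the theorem; what the paper's route buys is a self-contained computation from \eqref{eq:shap-val} alone, without appealing to the known allocation for unanimity games (though, as you note, that allocation is a two-line consequence of the axioms). Your handling of the $W_g$ convention and the efficiency cross-check are both sound, so there is no gap.
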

\begin{proof}
    See Appendix \ref{app3}.
\end{proof}

\begin{rem}
\label{rem:alpha-fair}
    \emph{The results obtained in Theorem \ref{th:smmu-res} under the identical participant model can be recovered from the above theorem by assuming $W_i = 1$. In fact, in the case where the workload is more or less uniformly distributed and not too concentrated, as formalized by the condition 
    \[
        \lim_{n \to \infty} \sum_i f_i^2 = 0,
    \]
    the above theorem tells us that $\frac{\phi_g(\nu)}{\nu(N)} \approx \frac{1}{3}$ and $\frac{\phi_{u_i}(\nu)}{\nu(N)} \approx \frac{2}{3}f_i$, which is equivalent to dividing the value between the founder and the crowd in the same way as in the identical participant model, and then dividing the value among the crowd proportional to the share of work done.}
\end{rem}

\begin{rem}
\emph{In theorem \ref{th:smmuu-res}, we have treated the case $k = 2$. However, analogous results hold true for higher values of $k$ as well. It can be shown that}
\begin{align*}
    \lim_{n \to \infty} \frac{\phi_g(\nu)}{\nu(N)} &= \frac{1}{k+1} + O\left(\sum_i f_i^2\right) \\ \lim_{n \to \infty} \frac{ \phi_{u_i}(\nu)}{\nu(N)} &= \frac{k}{k+1}f_i - O\left(f_i^2\right).
\end{align*}
\end{rem}

\subsection{Coalition value based on profit}
A legitimate concern regarding the use of revenue as a proxy for value is that the founder could be incurring significant costs for infrastructure which the minor participants do not. In this case the resulting profit, revenue minus cost, is a more natural proxy for value.  A question which our Shapley allocation derivation answers in the positive is the observable impact of the large infrastructure cost incurred by the founder compared to the small cost incurred by the crowd members. 

We modify the model to take into account such costs. We assume that each minor participant incurs a fixed constant cost $k_u$ and the major participant pays a cost of $K_g$ per minor participant with $K_g >> K_u$. Formalizing all this, we get a new value function $\nu: 2^{N} \rightarrow \R$ 
\begin{equation}
\label{eq:smmu-model-profit}
\nu(S) := \begin{cases}
          0 \quad &\text{if} \, g \notin S \\
             \rho |S-\{g\}|^k - K_g|S-\{g\}| - \\
             \quad k_u|S-\{g\}| \quad &\text{if} \, g \in S \\
     \end{cases}
\end{equation}

We compute the Shapley values in this case by appealing to Lemma \ref{lem:first} again.

\begin{thm}
\label{th:smmu-res-profit}
Consider the coalition game with the set of agents $N = \{g, u_1, ..., u_n\}$ and value $\nu$ given by \eqref{eq:smmu-model-profit}. The associated Shapley values satisfy
\begin{equation}
\begin{aligned}
\label{eq:smmu-res2}
   &\frac{\phi_g(\nu)}{\nu(N)} \approx \frac{\frac{\rho n^{k}}{k+1}-K_g\frac{n}{2}-k_u\frac{n}{2}}{\rho n^k - K_gn - k_un} \\ &\frac{\sum_{i=1}^n \phi_{u_i}(\nu)}{\nu(N)} \approx \frac{\frac{\rho kn^{k}}{k+1}-K_g\frac{n}{2}-k_u\frac{n}{2}}{\rho n^k - K_gn - k_un}
\end{aligned}
\end{equation}
for large values of $n$.
\end{thm}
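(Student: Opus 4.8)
The plan is to apply Lemma~\ref{lem:first} directly, exactly as announced in the text preceding the theorem, with the coalition-value function now having the crowd-size profile $f(s) := \rho s^k - (K_g+k_u)s$ (equivalently, one may split $\nu$ into a revenue part and a linear cost part and invoke the linearity axiom~(4); the cost part splits as $-K_g$ times a ``linear in crowd-size'' game plus $-k_u$ times the same game). Since \eqref{eq:smmu-model-profit} is of the form covered by Lemma~\ref{lem:first}, we obtain immediately
\[
  \phi_g(\nu) = \frac{1}{n+1}\sum_{s=0}^{n} f(s) = \frac{\rho}{n+1}\sum_{s=0}^n s^k \;-\; \frac{K_g+k_u}{n+1}\sum_{s=0}^n s,
  \qquad
  \phi_{u_i}(\nu) = \frac{\nu(N)-\phi_g(\nu)}{n}.
\]

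Next I would estimate the two sums. For the first, the Faulhaber expansion $\sum_{s=0}^n s^k = \frac{n^{k+1}}{k+1}+O(n^k)$ already used in the proof of Theorem~\ref{th:smmu-res} gives $\frac{\rho}{n+1}\sum_{s=0}^n s^k = \frac{\rho n^k}{k+1} + O(n^{k-1})$. The second sum is exact: $\frac{K_g+k_u}{n+1}\cdot\frac{n(n+1)}{2} = (K_g+k_u)\frac{n}{2}$. Hence $\phi_g(\nu) = \frac{\rho n^k}{k+1} - (K_g+k_u)\frac{n}{2} + O(n^{k-1})$. Because the grand coalition has crowd-size $n$, we have $\nu(N) = \rho n^k - K_g n - k_u n$, and the second identity of Lemma~\ref{lem:first} yields $\sum_{i=1}^n \phi_{u_i}(\nu) = \nu(N) - \phi_g(\nu) = \frac{k\rho n^k}{k+1} - (K_g+k_u)\frac{n}{2} + O(n^{k-1})$. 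Dividing both of these quantities by $\nu(N)$ produces the two displayed ratios in \eqref{eq:smmu-res2}.

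The proof has essentially no obstacle beyond Lemma~\ref{lem:first}; the remaining work is bookkeeping. The only points deserving care are (i) the precise sense of the symbol $\approx$: the discarded terms are $O(n^{k-1})$ in the numerators while the denominator $\nu(N) = \rho n^k - (K_g+k_u)n$ is of order $n^k$, so \eqref{eq:smmu-res2} is an equality up to relative error $O(1/n)$ provided $\nu(N)$ stays bounded away from $0$ — automatic for $k\ge 2$, and requiring $\rho > K_g+k_u$ when $k=1$ (otherwise the ``value'' is non-positive and the ratio is not meaningful); and (ii) a remark that the modelling hypothesis $K_g \gg k_u$ is nowhere used in the derivation — the two cost rates enter the allocation in exactly the same way, each being split one-half to the founder and one-half (spread over the $n$ crowd members) to the crowd, mirroring how the linear term is shared in the $k=1$ case of Theorem~\ref{th:smmu-res}; $K_g \gg k_u$ only colours the interpretation of the resulting formula.
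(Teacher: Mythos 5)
Your proposal is correct and follows exactly the paper's route: the paper's proof is a one-line appeal to Lemma~\ref{lem:first} with the same computation as Theorem~\ref{th:smmu-res}, which is precisely what you carry out (with the cost term summing exactly to $(K_g+k_u)\tfrac{n}{2}$). Your added remarks on the meaning of $\approx$ and on the irrelevance of $K_g \gg k_u$ to the algebra are accurate but not needed to match the paper's argument.
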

\begin{proof}

Using the results of Lemma \ref{lem:first}, the exact same computation in the proof of Theorem \ref{th:smmu-res} delivers this result.
\end{proof}

\begin{rem}
    \emph{Even though in the limit $n \to \infty$, the results of Theorem \ref{th:smmu-res-profit} collapse to that of \ref{th:smmu-res}, the value of $K_g$ might be large enough that the contribution of $K_gn$ cannot be ignored for the moderately large values of $n$ that typically arise.}
\end{rem}
\begin{rem}
    \emph{In this case, the ratio of the share of the founder to the share of the crowd is}
    \[
    \frac{\phi_g(\nu)}{\sum_{i =1}^n \phi_{u_i}(\nu)} = \frac{\frac{\rho n^{k}}{k+1}-K_g\frac{n}{2}-k_u\frac{n}{2}}{\frac{\rho kn^{k}}{k+1}-K_g\frac{n}{2}-k_u\frac{n}{2}}.
    \]
    \end{rem}
    This shows that when $k > 1$, as the cost of the system infrastructure $K_g$ increases, the share of the founder compared to the crowd of participants decrease. This may seem at first counterintuitive or unfair, especially compared to the case $k=1$ where the share is constant regardless of the cost of the infrastructure. This phenomenon occurs because the revenue should be divided according to $\frac{1}{k+1}$ and $\frac{k}{k+1}$ between the founder and the crowd, whereas the total cost is split equally.  Thus superlinear scaling of revenue and linear scaling of cost give a decreasing proportional share to the founder even though the absolute value of the founder share is still increasing.
    
    Another observation regarding (6) is that while the share of the founder $\phi_g(\nu)$ is increasing in $k$, the founder's profit decreases as a percentage of the total profit. This again is a consequence of the revenue of the network growing as a power of its size. 

\section{Oligopolies of Crowd-Sourced Systems}   
\label{sec4}

We now consider the case of multiple crowd-sourced systems where major participants or founders are willing to cooperate, thus inter-working $CSS$s for a higher total payoff to all. Let us assume that each major participant $v$ has $n_v$ minor participants in its $CSS(v)$, crowd-sourced system affiliated with $v$. The major participants now agree to cooperate for higher payoff by inter-networking their disparate crowds. This is done through a series of bilateral agreements between pairs of major players.  The key consequence of such cooperation between major participants $i$ and $j$ is joint creation of a payoff which is proportional to $(n_i+n_j)^2$ for the Metcalfe value model we considered previously, in contrast to $n_i^2+n_j^2$ for the case $i$ and $j$ do not cooperate. The question we wish to answer is fair allocation to each $CSS$, and each major and minor participants in each $CSS$ assuming global cooperation represented by a graph $G$ as defined below. 

\subsection*{Additional Notation}

\begin{flushleft}
\begin{tabular}{ll}
    $V$ & Set of vertices $v$, each representing a \\
        & crowd-sourced system, $v \equiv CSS(v)$\\
    $E$ & Set of edges, each edge denoting an agreement \\
        & between its ends \\
    $G$ & Oligopoly graph of $CSS$s and their bilateral\\
        & agreements \\
    $S$ & A subset of vertices $V$ \\
    $G(S)$ & Subgraph of $G$ induced by $S$ \\
    \end{tabular}
\end{flushleft}
\begin{flushleft}
\begin{tabular}{ll}
    $v$ & A vertex/participant in $V$\\
    $e$ & An edge/agreement in $E$\\
    $N(v)$ & Set of neighbors of the vertex $v$ in graph $G$ \\
    $n_v$ & Number of participants(crowd) in $v$'s \\
    & crowd-sourced system \\
\end{tabular}
\end{flushleft}

\subsection{Shapley value of an Oligopoly: Coarse-grain model}
We now assume that major participants have formed a set of pairwise, bilateral, agreements amongst themselves. We represent this by a connected graph $G = (V, E)$, whose vertices $v \in V$ denote $CSS$s associated with major participants and each edge in $E$ denotes a bilateral agreement. A natural extension of the notion of Metcalfe value to a graph $G$ is given by
\begin{equation}
    \label{eq:network-val}
        \nu(G) = \sum_{v \in V} n_v^2 + \sum_{(u, w) \in E} n_un_w.
\end{equation}
Here $n_v^2$ is the Metcalfe law restricted to the system $v$ and each edge $(u, w)$ adds an additional $n_un_w$ to the value of $G$ due to inter-networking  between vertices $u$ and $w$. 

We assume that the crowd-sourced systems associated with major participants $v \in V$ form the set of agents of a cooperative game. The value of a subset $S \subset V$ is the Metcalfe value \eqref{eq:network-val} of the subgraph $G(S)$ induced by $S$. This is the graph consisting of vertices $S$ and all edges of $G$ both ends of which lie in $S$. This is implicitly an aggregate model that clubs together all the minor participants of a $CSS$ with the major participant, modelling the scenario where each minor participant has already been bound to its major participant, perhaps by means of an agreement or due to some external constraints such as geographic that limits the choice of the minor participant. As a result, in the following $\phi_u(\nu)$ denotes the share of the crowd-sourced system associated with major participant $u$, and not just the share of the major participant $u$ in $CSS(u)$ by itself (which will be discussed in Remark 4.4). 

The following result characterizes the Shapley value of each $CSS$.

\begin{thm}
\label{th:network-shap-res}
Consider a graph $G = (V, E)$ and a cooperative game with set of agents $N = V$ and value function
\[
    \nu(S) = \sum_{v \in S} n_v^2 + \sum_{(u, w) \in S\times S \cap E} n_un_w.
\]
The Shapley value of each vertex $u$ is given by
\begin{equation}
    \label{eq:network-shap-res}
    \phi_u(\nu) (\equiv \phi_{CSS(u)}(\nu))= n_u^2 + \sum_{w \in N(u)} n_un_w
\end{equation}
\end{thm}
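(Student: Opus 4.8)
The plan is to compute $\phi_u(\nu)$ directly from the Shapley formula \eqref{eq:shap-val} by exploiting the additivity property (4) of Shapley allocations, decomposing $\nu$ into a sum of elementary value functions whose Shapley allocations are trivial to evaluate. Observe that the value function splits as
\[
    \nu = \sum_{v \in V} \nu_v + \sum_{(a,b) \in E} \nu_{ab},
\]
where $\nu_v(S) = n_v^2 \cdot \indic{v \in S}$ is supported on the single vertex $v$, and $\nu_{ab}(S) = n_a n_b \cdot \indic{a \in S \text{ and } b \in S}$ is the ``edge game'' that pays off only when both endpoints of the edge are present. By linearity it suffices to find the Shapley allocation of each summand and add them up.

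For the vertex game $\nu_v$: the marginal contribution of $v$ to any coalition is $n_v^2$, and every other player is a null player, so $\phi_v(\nu_v) = n_v^2$ and $\phi_w(\nu_v) = 0$ for $w \neq v$. For the edge game $\nu_{ab}$: players other than $a, b$ are null, and $a, b$ are symmetric (each adds value $n_a n_b$ exactly when the other is already present), so by symmetry and efficiency $\phi_a(\nu_{ab}) = \phi_b(\nu_{ab}) = \tfrac{1}{2} n_a n_b$. Summing over all elementary games, the allocation to vertex $u$ is
\[
    \phi_u(\nu) = n_u^2 + \sum_{w \,:\, (u,w) \in E} \tfrac{1}{2} n_u n_w + \sum_{w \,:\, (w,u) \in E} \tfrac{1}{2} n_w n_u = n_u^2 + \sum_{w \in N(u)} n_u n_w,
\]
where the factor $\tfrac12$ is absorbed because each undirected edge $\{u,w\}$ is counted twice, once in each orientation — this is the one bookkeeping point to state carefully. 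Alternatively, if one prefers to avoid double-counting, treat $E$ as a set of unordered pairs and note that each edge incident to $u$ contributes $\tfrac12 n_u n_w$ from that single edge game, but $u$ also appears as an endpoint once, so the total over edges at $u$ is $\sum_{w \in N(u)} \tfrac12 n_u n_w$; the claimed formula then requires re-examining the convention in \eqref{eq:network-val}. I would resolve this by fixing the convention that $\nu(S)$ counts each induced edge once, so that $\phi_u(\nu) = n_u^2 + \tfrac12\sum_{w\in N(u)} n_u n_w$; if the theorem as stated omits the $\tfrac12$, the definition of $\nu$ must be implicitly using $n_u n_w$ per ordered pair. I will present the proof under whichever convention makes \eqref{eq:network-shap-res} correct.

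The only genuine obstacle is thus purely notational: reconciling the $\tfrac12$ from the symmetric split of each edge game with the stated closed form, i.e.\ pinning down whether edges are counted once or twice in $\nu$. Everything else is immediate from the three axioms (null player, symmetry, efficiency) plus linearity, with no need to touch the combinatorial sum in \eqref{eq:shap-val} at all. As a sanity check one can verify efficiency directly: $\sum_{u} \phi_u(\nu) = \sum_v n_v^2 + \sum_{(u,w)\in E} n_u n_w = \nu(V)$, which matches \eqref{eq:network-val}.
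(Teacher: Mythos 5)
Your proof is correct, but it takes a genuinely different route from the paper's. The paper proves this theorem by direct computation: it writes the marginal contribution $\nu(S\cup\{u\})-\nu(S) = n_u^2 + \sum_{w\in N(u),\, w\in S} 2n_un_w$, substitutes it into the Shapley formula \eqref{eq:shap-val}, interchanges the order of summation, and evaluates the combinatorial sum $\sum_{s=1}^{|V|-1} \frac{s}{|V|(|V|-1)} = \tfrac12$. You instead decompose $\nu$ by linearity into vertex games and edge games and invoke only the null-player, symmetry and efficiency axioms, never touching the combinatorial sum; this is cleaner, and it is in fact the same strategy the paper itself uses later for the fine-grain oligopoly model (Theorem \ref{th:network-shap-res1}), where $\xi$ is split into the games $\xi_v$ and $\xi_{v,w}$. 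Your concern about the factor of $\tfrac12$ is legitimate, and you resolve it the right way: the paper's own marginal-contribution computation counts both orientations $(u,w)$ and $(w,u)$ of each edge (yielding $2n_un_w$ per neighbor in $S$), i.e.\ the sum over $S\times S\cap E$ is to be read over ordered pairs; under that reading each edge game pays $2n_an_b$, its symmetric split gives $n_an_b$ to each endpoint, and the stated formula \eqref{eq:network-shap-res} --- together with your efficiency sanity check --- comes out exactly as claimed.
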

\begin{proof}
    See Appendix \ref{app4}.
\end{proof}

\begin{figure}
\label{fig:oligo}
\begin{center}
  \begin{tikzpicture}
    \node[shape=circle,draw=black] (A) at (0,3) {$n_A$};
    \node[shape=circle,draw=black] (B) at (2.5,4) {$n_B$};
    \node[shape=circle,draw=black] (C) at (2.5,1) {$n_C$};
    \node[shape=circle,draw=black] (D) at (5,3) {$n_D$} ;

    \path [-](A) edge node[left] {} (B);
    \path [-](C) edge node[left] {} (B);
    \path [-](C) edge node[] {} (A);
    \path [-](B) edge node[] {} (D);
\end{tikzpicture}
\end{center}
\caption{Graph of four crowd-sourced systems with four bilateral agreements.}
\end{figure}
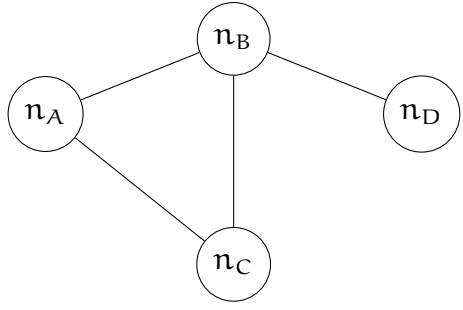

For example, in the oligopoly shown in Figure \ref{fig:oligo}, we can compute the Shapley values as
\begin{align*}
    \phi_{A}(\nu) &= n_A(n_A+n_B+n_C), \\
    \phi_{B}(\nu) &= n_B(n_A+n_B+n_C+n_D), \\
    \phi_{C}(\nu) &= n_C(n_A+n_B+n_C), \\
    \phi_{D}(\nu) &= n_D(n_B+n_D).
\end{align*}

\begin{rem}
\emph{In the special case where all major participants have equal number of minor participants within their system, i.e. $n_v = n$ for all $v \in V$, we get}
\begin{equation}
    \phi_u(\nu) = n^2(1 + \text{\emph{deg}}(u)).
\end{equation}
\end{rem}
\begin{rem}
    \emph{If we consider $V = \{g, u_1, ..., u_n\}$, that there is an edge between $g$ and $u_i$ for each $i = 1,...,n$, and that $n_{u_i} = 1$ and $n_{u_g} = 1$, we get a model that is very close to \eqref{eq:smmu-model} with $k = 2$. The only difference here is that the coalitions that don't contain $g$ have non-zero value that scales linearly with its size. Unsurprisingly, we get}
    \begin{align*}
        &\phi_g(\nu) = 1 + \text{\emph{deg}}(g) = 1 + n, \\ &\phi_{u_i}(\nu) = 2 \ \implies \nu(N) = 3n + 1, \\
        &\implies \lim_{n \to \infty} \frac{\phi_g(\nu)}{\nu(N)} = \frac{1}{3}, \quad \lim_{n \to \infty} \frac{\sum_{i=1}^{n}\phi_{u_i}(\nu)}{\nu(N)} = \frac{2}{3}
    \end{align*}
    \emph{exactly the same as in Theorem \ref{th:smmu-res}.}
\end{rem}

\subsection{Shapley value of an Oligopoly: Fine-grain model}
Theorem \ref{th:network-shap-res} and the Remarks 4.2 and 4.3 derive the Shapely value of each crowd-sourced system $($CSS$)$ associated with each vertex in the oligopoly game of graph $G$.  
What remains is to identify the Shapley fair share of each participant in each $CSS$ associated with each major player $v \in G$.  To do this, we need to define a new value function on each $CSS$ which takes its size and sizes of its neighboring $CSS$s explicitly into account.

To this end, we consider a more fine-grained model here. We consider the set $U_v$ of minor participants in the $CSS$ associated with major participant $v \in V$ as separate agents, making the total set of agents of the cooperative game \[N = \cup_{v \in V} U_v \cup V.\]
A subset $S \subset N$ can be partitioned as $S_v = S \cap U_v$ for each $v \in V$ and $S_V = S \cap V$. The value of the coalition is computed based on two simple principles: each $v \in S_V$ contributes $|S_v|^2$ and each $e = (v,w) \in G(S_V)$ contributes $|S_v||S_w|$ to the value of $S$. That is, 
\begin{equation}
    \label{eq:oli-fine-model}
    \xi(S) = \rho \left(\sum_{v \in S_V} |S_v|^2 + \sum_{(u,w) \in S_V \times S_V \cap E} |S_u||S_w|\right)
\end{equation}
This value function and agent set jointly capture the cooperative game involving all $CSS$s and their minor and major participants. For $v \in V$, we denote by $n_v := |U_v|$ the size of its set of minor participants.

\begin{thm}
\label{th:network-shap-res1}
Consider a graph $G = (V, E)$ where each node $v \in V$ has an associated set $U_v$, and a cooperative game with set of agents $N = \cup_{v \in V} U_v \cup V$ and value function
\[
    \xi(S) = \rho \left(\sum_{v \in S_V} |S_v|^2 + \sum_{(u,w) \in S_V \times S_V \cap E} |S_u||S_w|\right)
\]
where $S_V = S\cap V$ and $S_v = S\cap U_v$ for each $v \in V$. The Shapley value of each major participant $v \in V$ is the sum of two terms, the intra and inter crowd payoffs, given by
\begin{equation}
    \label{eq:network-shap-res2}
    \phi_v(\xi) = \rho \left( \frac{n_v^2}{3} - O(n_v) \right) + \rho \left( \sum_{w \in N(v)} \frac{n_vn_w}{2} \right)
\end{equation}

The Shapley value of each minor participant $u_v \in U_v$ is also the sum of two terms 
\begin{equation}
    \label{eq:network-shap-res-minor}
   \phi_{u_v}(\xi) = \rho \left( \frac{2n_v}{3} + O(1) \right)  + \rho \left(\sum_{w \in N(v)} \frac{n_w}{2}\right).
\end{equation}  
\end{thm}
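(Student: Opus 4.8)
The approach I would take rests on two elementary facts. First, by the explicit formula \eqref{eq:shap-val}, the map $\nu \mapsto \phi_i(\nu)$ is linear. Second, for a \emph{unanimity game} $u_T$ — defined by $u_T(S) = 1$ if $T \subseteq S$ and $u_T(S) = 0$ otherwise — one has $\phi_i(u_T) = \indic{i \in T}/|T|$, since the members of $T$ are mutually symmetric, every other agent is a null player, and efficiency \eqref{eq:shap-ax} splits the unit value. So the plan is to expand $\xi$ as a linear combination of unanimity games and read off each Shapley value directly. Writing $x_i := \indic{i \in S}$, and using $|S_v| = \sum_{a \in U_v} x_a$ and $x_v = \indic{v \in S_V}$, the value function \eqref{eq:oli-fine-model} is the polynomial
\[
  \xi(S) = \rho \sum_{v \in V} x_v\Big(\sum_{a \in U_v} x_a\Big)^{2} \;+\; \rho \sum_{(v,w) \in E} x_v x_w \Big(\sum_{a \in U_v} x_a\Big)\Big(\sum_{b \in U_w} x_b\Big),
\]
and since $x_a^2 = x_a$, expanding turns every monomial into some $\prod_{i \in T} x_i = u_T(S)$.

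Carrying out the expansion, the vertex term of a fixed $v$ contributes the $n_v$ sets $\{v,a\}$ with coefficient $1$ and the $\binom{n_v}{2}$ sets $\{v,a,a'\}$ with coefficient $2$; the edge $\{v,w\}$ contributes the $n_v n_w$ sets $\{v,w,a,b\}$ (with $a\in U_v,\ b\in U_w$) with coefficient $2$, where I adopt the convention — the one consistent with Theorem \ref{th:network-shap-res}, with its star-graph specialization, and with the efficiency axiom — that the edge sum in \eqref{eq:oli-fine-model} ranges over ordered pairs, so each undirected edge contributes with multiplicity two. Hence $\xi = \rho\sum_T c_T\, u_T$ with explicit nonnegative integer coefficients $c_T$, and by linearity $\phi_i(\xi) = \rho\sum_{T \ni i} c_T/|T|$.

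It then remains to collect, for each participant, the sets $T$ with $c_T \neq 0$ containing it. Since $V$ and the $U_v$ are pairwise disjoint, a major participant $v$ lies only in sets from its own vertex term (the $n_v$ pairs $\{v,a\}$ and the $\binom{n_v}{2}$ triples $\{v,a,a'\}$, coefficients $1$ and $2$) and from its incident edges (the $n_v n_w$ quadruples $\{v,w,a,b\}$ with coefficient $2$, one family per $w \in N(v)$); summing $\rho\,c_T/|T|$ over these gives $\rho\big(\tfrac{n_v}{2} + \tfrac23\binom{n_v}{2}\big) + \rho\sum_{w\in N(v)}\tfrac{2n_v n_w}{4} = \rho\big(\tfrac{n_v^2}{3} + \tfrac{n_v}{6}\big) + \rho\sum_{w \in N(v)}\tfrac{n_v n_w}{2}$, which is \eqref{eq:network-shap-res2}. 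Likewise a minor participant $u_v \in U_v$ lies only in the pair $\{v,u_v\}$ (coefficient $1$), the $n_v-1$ triples $\{v,u_v,a'\}$ (coefficient $2$), and, for each $w \in N(v)$, the $n_w$ quadruples $\{v,w,u_v,b\}$ (coefficient $2$), yielding $\rho\big(\tfrac12 + \tfrac{2(n_v-1)}{3}\big) + \rho\sum_{w\in N(v)}\tfrac{n_w}{2} = \rho\big(\tfrac{2n_v}{3} - \tfrac16\big) + \rho\sum_{w\in N(v)}\tfrac{n_w}{2}$, which is \eqref{eq:network-shap-res-minor}; in particular the ``$O(n_v)$'' and ``$O(1)$'' in the statement are the explicit quantities $\tfrac{n_v}{6}$ and $-\tfrac16$.

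The computation is routine once the unanimity expansion is set up; the step I would be most careful about is pinning down the edge-counting convention in \eqref{eq:oli-fine-model}, since the factor of two there is exactly what turns the naive $\tfrac14$ per edge into the $\tfrac12$ appearing in the inter-crowd terms. I would settle it with two consistency checks: that $\sum_{v\in V}\phi_v(\xi) + \sum_{v\in V}\sum_{u_v\in U_v}\phi_{u_v}(\xi) = \xi(N)$ (efficiency), and that aggregating over one $CSS$, $\phi_v(\xi) + \sum_{u_v\in U_v}\phi_{u_v}(\xi) = \rho\big(n_v^2 + O(n_v)\big) + \rho\sum_{w\in N(v)} n_v n_w$, reproduces to leading order the coarse-grained Shapley value $\phi_{CSS(v)}$ of Theorem \ref{th:network-shap-res}. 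One could instead obtain the vertex-term contributions by applying Lemma \ref{lem:first} with $f(s) = \rho s^2$ to the restricted game on $\{v\}\cup U_v$, but the edge terms couple two crowds and two founders simultaneously, so the unanimity-game expansion is the cleaner and more uniform route.
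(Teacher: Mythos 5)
Your proposal is correct, and it is close in spirit to the paper's own argument: both rest on linearity of the Shapley value together with a decomposition of $\xi$ into elementary games whose Shapley values are forced by the symmetry, null-player and efficiency axioms. The difference is where the decomposition stops. The paper splits $\xi$ into per-vertex games $\xi_v$ and per-edge games $\xi_{v,w}$; the intra-crowd term is handled by noting that $\xi_v$, restricted to $\{v\}\cup U_v$, is exactly the single-founder game \eqref{eq:smmu-model} with $k=2$, so Lemma \ref{lem:first}/Theorem \ref{th:smmu-res} apply, while each edge game is further split into four-player games $\xi_{v,w}^{u_v,u_w}$ -- which are precisely twice the unanimity game on $\{v,w,u_v,u_w\}$ in your language -- and resolved by symmetry. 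You instead push the unanimity expansion uniformly through both terms, which buys two things: exact constants in place of the paper's $O(n_v)$ and $O(1)$ (your $\tfrac{n_v}{6}$ and $-\tfrac{1}{6}$ agree with the exact single-founder computation $\phi_g=\tfrac{1}{n_v+1}\sum_{s=0}^{n_v}s^2=\tfrac{n_v^2}{3}+\tfrac{n_v}{6}$), and a single mechanism covering intra- and inter-crowd contributions instead of two separate sub-arguments. You also make explicit a point the statement leaves ambiguous, namely whether the edge sum in \eqref{eq:oli-fine-model} ranges over ordered or unordered pairs; your choice (each undirected edge contributing with multiplicity two) is the one the paper silently adopts in its proof (its $\xi_{v,w}$ carries the factor $2$) and is the only reading consistent with Theorem \ref{th:network-shap-res} and with efficiency, so flagging it and verifying $\sum_a \phi_a(\xi)=\xi(N)$ is a gain in rigor rather than a deviation.
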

\begin{proof}
    See Appendix \ref{app4}.
\end{proof}

\begin{rem}
    \emph{The Shapley share of the major participant $u$ in $CSS(u)$ (vertex $u \in G$ of the oligopoly graph $G$) can be written as}
    \begin{align*}
        \frac{\phi_{v} (\xi)}{\phi_v(\xi) + \sum_{u_v \in U_v} \phi_{u_v}(\xi)} &= \frac{ \frac{n_v^2}{3} + O(n_v) + \sum_{w \in N(v)}\frac{n_vn_w}{2}}{ n_v^2 + \sum_{w \in N(v)}n_vn_w} 
    \end{align*}
    \emph{For large enough numbers of minor participants $n_v$, this ratio can be simplified as}
     \begin{align*}
        &\frac{\phi_{v} (\xi)}{\phi_v(\xi) + \sum_{u_v \in U_v} \phi_{u_v}(\xi)} = \frac{ \frac{n_v}{3} + \sum_{w \in N(v)}\frac{n_w}{2}}{ n_v + \sum_{w \in N(v)}n_w} \\
        &= \frac{1}{3}\times \frac{n_v}{n_v + \sum_{w \in N(v)}n_w} + \frac{1}{2}\times \frac{\sum_{w \in N(v)}n_w}{n_v + \sum_{w \in N(v)}n_w}.
    \end{align*}
    \emph{This ratio is greater than $\frac{1}{3}$ and less than $\frac{1}{2}$, moving closer to $\frac{1}{2}$ as the number of minor participants in the neighbouring $CSS$ grows larger than that in the $CSS$ corresponding to $v$.}
\end{rem}

\section{Geographic Crowd-Sourced Systems}

\label{sec5}
In this last section, we discuss the extension of the models we have considered so far by considering $CSS$s which are geographically distributed. More concretely, consider a collection of agents and for each agent the region covered by it. For simplicity we refer to these regions as ``disks", even though these can be of arbitrary shape. Figure \ref{fig:geo-ex} shows five disks associated with five agents.

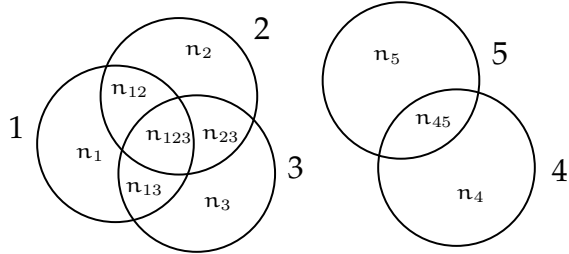
\begin{figure}[!h]
\advance\leftskip-4cm
\centering

\tikzset{every picture/.style={line width=0.75pt}} 

\tikzset{every picture/.style={line width=0.75pt}} 

\begin{tikzpicture}[x=0.75pt,y=0.75pt,yscale=-1,xscale=1]

\draw   (85,87.5) .. controls (85,65.68) and (102.68,48) .. (124.5,48) .. controls (146.32,48) and (164,65.68) .. (164,87.5) .. controls (164,109.32) and (146.32,127) .. (124.5,127) .. controls (102.68,127) and (85,109.32) .. (85,87.5) -- cycle ;
\draw   (53,111.5) .. controls (53,89.68) and (70.68,72) .. (92.5,72) .. controls (114.32,72) and (132,89.68) .. (132,111.5) .. controls (132,133.32) and (114.32,151) .. (92.5,151) .. controls (70.68,151) and (53,133.32) .. (53,111.5) -- cycle ;
\draw   (94,126.5) .. controls (94,104.68) and (111.68,87) .. (133.5,87) .. controls (155.32,87) and (173,104.68) .. (173,126.5) .. controls (173,148.32) and (155.32,166) .. (133.5,166) .. controls (111.68,166) and (94,148.32) .. (94,126.5) -- cycle ;
\draw   (197,79.5) .. controls (197,57.68) and (214.68,40) .. (236.5,40) .. controls (258.32,40) and (276,57.68) .. (276,79.5) .. controls (276,101.32) and (258.32,119) .. (236.5,119) .. controls (214.68,119) and (197,101.32) .. (197,79.5) -- cycle ;
\draw   (225,123.5) .. controls (225,101.68) and (242.68,84) .. (264.5,84) .. controls (286.32,84) and (304,101.68) .. (304,123.5) .. controls (304,145.32) and (286.32,163) .. (264.5,163) .. controls (242.68,163) and (225,145.32) .. (225,123.5) -- cycle ;

\draw (71,112) node [anchor=north west][inner sep=0.75pt]  [font=\scriptsize] [align=left] {$\displaystyle n_{1}$};
\draw (126,60) node [anchor=north west][inner sep=0.75pt]  [font=\scriptsize] [align=left] {$\displaystyle n_{2}$};
\draw (88.5,80) node [anchor=north west][inner sep=0.75pt]  [font=\scriptsize] [align=left] {$\displaystyle n_{12}$};
\draw (106,103) node [anchor=north west][inner sep=0.75pt]  [font=\scriptsize] [align=left] {$\displaystyle n_{123}$};
\draw (96,129.5) node [anchor=north west][inner sep=0.75pt]  [font=\scriptsize] [align=left] {$\displaystyle n_{13}$};
\draw (135,101) node [anchor=north west][inner sep=0.75pt]   [align=left] {$ $};
\draw (135.5,137) node [anchor=north west][inner sep=0.75pt]  [font=\scriptsize] [align=left] {$\displaystyle n_{3}$};
\draw (221,63) node [anchor=north west][inner sep=0.75pt]  [font=\scriptsize] [align=left] {$\displaystyle n_{5}$};
\draw (263,133) node [anchor=north west][inner sep=0.75pt]  [font=\scriptsize] [align=left] {$\displaystyle n_{4}$};
\draw (242,95) node [anchor=north west][inner sep=0.75pt]  [font=\scriptsize] [align=left] {$\displaystyle n_{45}$};
\draw (134.5,102.5) node [anchor=north west][inner sep=0.75pt]  [font=\scriptsize] [align=left] {$\displaystyle n_{23}$};
\draw (37,96) node [anchor=north west][inner sep=0.75pt]   [align=left] {1};
\draw (281,59) node [anchor=north west][inner sep=0.75pt]   [align=left] {5};
\draw (161,46) node [anchor=north west][inner sep=0.75pt]   [align=left] {2};
\draw (178,116.5) node [anchor=north west][inner sep=0.75pt]   [align=left] {3};
\draw (311,118) node [anchor=north west][inner sep=0.75pt]   [align=left] {4};

\end{tikzpicture}
\caption{Geographic $CSS$ with five disks.}
\label{fig:geo-ex}
\end{figure}

In this model there are $m$ agents $M = \{1,2,...,m\}$, and each agent $i$ has a corresponding disk $D_i$. There are also users, set $U$, served by the agents in $M$. A user $u$ may be served by an agent $i$ only if $u$ is in the region $D_i$. Notice that a user in $D_i$ may also be in $D_j$ for one or more $j \neq i$. We need notation for number of users in the intersection of disks. For each subset of agents $S \subset M$, we will denote by $d_S$ the number of users in the geographic area exclusively covered by all $D_i$ for $i \in S$. For example, in Figure \ref{fig:geo-ex} $d_{\{1,2\}}$ is the number of users in $D_1 \cap D_2$ that don't belong to any of the other disks, and $d_{\{1,2,3\}}$ is the number of users in $D_1\cap D_2 \cap D_3$, etc. More formally, $d_S$ is the number of users in $\cap_{i \in S} D_i \cap_{j \in M-S} D_j^c$.

In this setting, we define the value functions $\nu_{lin}$ and $\nu_{Met}$ as follows. For any $S \subset M$, 
\begin{equation}
\label{eq:geo-linmet}
\begin{aligned}
    \nu_{lin}(\{i\}) &= \rho \sum_{i \in S \subset M}  \frac{d_S}{|S|}, \quad 
    \nu_{lin}(S) = \rho \sum_{i \in S} \nu_{lin}(\{i\}), \\
    \nu_{Met}(S) &=  \left( \nu_{lin}(S) \right)^2/\rho.
\end{aligned}
\end{equation}
For example, in the configuration of Figure \ref{fig:geo-ex} we have
\begin{align*}
\nu_{lin}(\{1\}) &= \rho \left(d_1 + \frac{d_{12}+d_{13}}{2} + \frac{d_{123}}{3}\right), \\ \nu_{lin}(\{2,3\}) &= \rho \left(d_2 + d_3 + \frac{d_{12} + d_{13}}{2} + d_{23} + \frac{2d_{123}}{3}\right),
\end{align*}
etc.
and 
\begin{align*}
\nu_{Met}(\{1\}) &= \rho \left(d_1 + \frac{d_{12}+d_{13}}{2} + \frac{d_{123}}{3}\right)^2, 
\\ \nu_{Met}(\{2,3\}) &= \rho \left(d_2 + d_3 + \frac{d_{12} + d_{13}}{2} + d_{23} + \frac{2d_{123}}{3}\right)^2,
\end{align*}
etc.

With these definitions, the geographic $CSS$ model is equivalent to an oligopoly of $CSS$s where the underlying graph $G$ is complete and $n_i := \nu_{lin}(\{i\})$. The Shapley value for each agent under the Metcalfe model is therefore given by Theorem \ref{th:network-shap-res} while the linear model is an additive cooperative game where Shapley value coincides with the value function, namely
\[
    \phi_i(\nu_{lin}) = \nu_{lin}(i)=\rho n_i,
    \quad \phi_i(\nu_{Met}) = \rho n_i\left( \sum_{i=1}^m n_i\right).  
\]

\subsection{Geographic Crowd-Sourced Systems with a Founder}

As a conclusion to the discussion of geographic crowd-sourced systems, we consider the contribution of a founder who establishes the infrastructure for the collection of geographic crowd-sourced system. For example, the agent who provides the software for connecting access points in a distributed wireless network e.g. \cite{Helium}. Just to set the notation, let

\subsection*{Notation}
\begin{flushleft}
\begin{tabular}{ll}
    $g$ & Main founder of the geographic $CSS$s\\
    $M=\{1,..,m\}$ & Set of agents $i$, each representing disk $D_i$\\
    $n_i$ & Number of users in  disk $D_i$.
\end{tabular}
\end{flushleft}

As in section \ref{sec:sm}, we define a macro value function $\xi$ which quantifies the cooperative game involving the founder and any subset of agents $M\cup\{g\}$.
\begin{equation}
\label{eq:geo-val}
\xi_{lin}(S) = \begin{cases}
\nu_{lin}(S-\{g\}) & \text{if } g \in S \\
0 & \text{otherwise}
\end{cases}
\end{equation}
and
\begin{equation}
\label{eq:geo-val2}
\xi_{Met}(S) = \begin{cases}
\nu_{Met}(S-\{g\}) & \text{if } g \in S \\
0 & \text{otherwise}.
\end{cases}
\end{equation}
where $\nu_{lin}$ and $\nu_{Met}$ are defined in \eqref{eq:geo-linmet}.
\begin{thm}
In a geographic crowd-sourced system with value function $\xi$ defined in \eqref{eq:geo-val},\eqref{eq:geo-val2}, the Shapley allocation to the founder and agent $i$ for each $i \in M$ is given by
\begin{equation}
    \phi_{g}(\xi_{lin}) = \rho \frac{\sum_i n_i}{2} , \quad \phi_{i}(\xi_{lin}) = \rho \frac{n_i}{2} \\
\end{equation}
\begin{equation}
\begin{aligned}
\phi_{g}(\xi_{Met}) &= \rho \left(\frac{\left(\sum_i n_i\right)^2}{3} + \frac{\sum_i n_i^2}{6}\right) \\ 
\phi_{i}(\xi_{Met}) &= \rho \left(\frac{2(\sum_i n_i) n_i}{3} - \frac{n_i^2}{6}\right).
\end{aligned}
\end{equation}
\end{thm}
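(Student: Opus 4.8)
The plan is to reduce the final theorem to results already established earlier in the paper, since the geographic model with a founder has been engineered precisely so that it fits the template of Lemma \ref{lem:first} (for the linear case) and of Theorem \ref{th:smmuu-res} (for the Metcalfe case). The key structural observation, made just before the statement, is that $\xi_{lin}(S)$ and $\xi_{Met}(S)$ vanish unless $g \in S$, and when $g\in S$ their value depends only on the agents in $S-\{g\}$ through the ``effective sizes'' $n_i := \nu_{lin}(\{i\})$. So the cooperative game on $M\cup\{g\}$ is exactly a founder-plus-crowd game of the type analyzed in Section \ref{sec3}, with the crowd being the $m$ geographic agents and each agent $i$ carrying weight $n_i$.

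For the linear case, I would observe that $\xi_{lin}$ restricted to coalitions containing $g$ is additive in the agents: $\xi_{lin}(S) = \rho\sum_{i\in S-\{g\}} n_i$. Then apply Lemma \ref{lem:first} with $f$ replaced by the appropriate weighted sum — more directly, use linearity (axiom 4 of \eqref{eq:shap-ax}): write $\xi_{lin} = \sum_i \rho n_i \,\xi^{(i)}$ where $\xi^{(i)}(S) = 1$ iff $\{g,i\}\subseteq S$. Each $\xi^{(i)}$ is a two-player unanimity-type game between $g$ and $i$ (all other agents are null players for it), so its Shapley value splits $1$ equally: $\phi_g(\xi^{(i)}) = \phi_i(\xi^{(i)}) = \tfrac12$, and all other $\phi_j(\xi^{(i)}) = 0$. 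Summing gives $\phi_g(\xi_{lin}) = \rho\sum_i n_i/2$ and $\phi_i(\xi_{lin}) = \rho n_i/2$, as claimed. This part is essentially immediate.

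For the Metcalfe case, the game is $\xi_{Met}(S) = \rho\big(\sum_{i\in S-\{g\}} n_i\big)^2$ when $g\in S$ and $0$ otherwise. This is precisely the value function \eqref{eq:smmuu-model} with $k=2$ and weights $W_i^\alpha$ chosen so that $n_i = W_i^\alpha$, equivalently with work-shares $f_i = n_i/\sum_j n_j$, since the complete graph on the geographic agents means every pair interacts and $\big(\sum n_i\big)^2 = \sum n_i^2 + \sum_{i\ne j} n_i n_j$. Hence I would invoke Theorem \ref{th:smmuu-res} directly: it gives $\phi_g(\xi_{Met})/\nu(N) \to \tfrac13 + \tfrac16\sum_i f_i^2$ and $\phi_{u_i}(\xi_{Met})/\nu(N) \to \tfrac23 f_i - \tfrac16 f_i^2$. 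Multiplying through by $\nu(N) = \rho\big(\sum_i n_i\big)^2$ and substituting $f_i = n_i/\sum_j n_j$ yields $\phi_g(\xi_{Met}) = \rho\big(\tfrac13(\sum_i n_i)^2 + \tfrac16\sum_i n_i^2\big)$ and $\phi_i(\xi_{Met}) = \rho\big(\tfrac23(\sum_i n_i)n_i - \tfrac16 n_i^2\big)$, matching the statement.

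The main obstacle is bookkeeping about what is exact versus asymptotic. Theorem \ref{th:smmuu-res} as stated gives only the $n\to\infty$ limiting ratios, whereas the final theorem presents the Metcalfe allocations as exact equalities. I would therefore want either to (a) check that for the pure quadratic $f(s) = \rho s^2$ (no lower-order terms) the Shapley computation via Lemma \ref{lem:first} and its weighted generalization is exact — i.e. the $O(\cdot)$ error terms in Theorem \ref{th:smmu-res} come only from approximating $\sum_{s=0}^n s^2$ and the leading coefficients are exact — or (b) redo the short direct computation: decompose $\xi_{Met} = \rho\sum_i n_i^2 \xi^{(i)} + \rho\sum_{i\ne j} n_i n_j \xi^{(ij)}$ where $\xi^{(i)}$ is the $\{g,i\}$-unanimity game and $\xi^{(ij)}$ the $\{g,i,j\}$-unanimity game, use the known Shapley values of unanimity games ($\phi_k = 1/|T|$ for $k$ in the carrier $T$, else $0$), and sum. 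Option (b) is cleanest and I would present that: it gives exact values with no asymptotics and makes the ``$\tfrac13$ vs $\tfrac16$'' split transparent (the $n_i^2$ terms contribute $\tfrac13$ to $g$ and the cross terms contribute $\tfrac13$ of their total, i.e. $\tfrac13\cdot(\text{everything}) + \tfrac16\cdot(\text{diagonal correction})$). I would relegate the detailed arithmetic to Appendix, consistent with how the paper handles its other proofs.
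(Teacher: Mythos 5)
Your proposal is correct, and its final form is a genuinely different (and tighter) route than the paper's. The paper proves this theorem with a one-line citation: it identifies $\xi_{Met}$ with the weighted model \eqref{eq:smmuu-model} at $k=2$, $W_i^\alpha = n_i$, and says the result ``follows from Theorem \ref{th:smmuu-res}'' --- which is exactly your first reduction, including the identification $f_i = n_i/\sum_j n_j$ and multiplication by $\nu(N)=\rho(\sum_i n_i)^2$; the linear case is handled implicitly in the same spirit. You correctly flag the weak point of that route: Theorem \ref{th:smmuu-res} is stated only as an $n\to\infty$ limit, while the present theorem asserts exact equalities for a fixed finite set $M$ of $m$ agents (and indeed the paper's following remark retreats to ``asymptotically approach $\tfrac13$ and $\tfrac23$''). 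Your option (b) closes that gap: decomposing $\xi_{lin}$ into $\{g,i\}$-unanimity games and $\xi_{Met}$ into $\rho\sum_i n_i^2\,\xi^{(i)} + \rho\sum_{i\neq j} n_i n_j\,\xi^{(ij)}$ with carriers $\{g,i\}$ and $\{g,i,j\}$, then using linearity, the null-player axiom, and the $1/|T|$ rule gives $\phi_g(\xi_{Met}) = \rho\bigl(\tfrac12\sum_i n_i^2 + \tfrac13\bigl((\sum_i n_i)^2-\sum_i n_i^2\bigr)\bigr)$ and $\phi_i(\xi_{Met}) = \rho\bigl(\tfrac12 n_i^2 + \tfrac23 n_i(\sum_j n_j - n_i)\bigr)$, which simplify exactly to the stated formulas with no asymptotic caveat. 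This is essentially the same decomposition-plus-null-player-plus-symmetry technique the paper itself deploys in the proof of Theorem \ref{th:network-shap-res1}, so it fits the paper's toolkit; what it buys over the paper's citation is exactness and independence from the large-$n$ regime, at the cost of a short extra computation.
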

\begin{proof}
Follows from theorem \ref{th:smmuu-res}.
\end{proof}

As in Remark \ref{rem:alpha-fair}, notice that $\frac{\phi_g(\xi_{Met})}{\xi_{Met}(M)}$ and $\frac{\sum_i \phi_i(\xi_{Met})}{\xi_{Met}(M)}$ asymptotically approach $\frac{1}{3}$ and $\frac{2}{3}$ as the number of agents increases.

\section{Empirical Analysis}
Our main thesis in this paper is that many crowd-sourced systems, old or new, do not compensate their crowd participants and when they do so, the share is a relatively small fraction of their total revenue.  Our goal is to determine what a reasonable share would be if fair allocation existed.  Unfortunately, the information about sharing of the total value/revenue with crowd participants, if and when this information exists, is hard to come by.  Often data or content from the crowd is collected as part of the service level agreement signed by the crowd participants in exchange for a service, which may be search, access to social media platform, use of navigational maps, access to wireless spectrum and others services.  In public records, it is often hard to carve out compensation paid to crowd participants directly.  In a few instances, however, it is possible to estimate these payouts.  Here, we give 3 examples: Spotify, Pandora and YouTube:
\begin{itemize}

\item Spotify compensates audio contributors an average of 0.003-0.005 cents per stream which comes to about 2/3 of its total revenues in 2021 (\$11.4B), see \cite{TwoStory} and \cite{USAToday}.

\item Pandora pays an average of 0.00133 cents per stream which translates to 70\% of its total revenues to artists and publishers in 2021 (total revenue \$3.64 B in 2021)) of which 57\% goes to artists or record labels, ie 2/5 of the total revenue, see \cite{TimesInt}.

\item YouTube is our 3rd example and here it is possible to get more detailed information from public records which we describe below.

\end{itemize}

YouTube, an Alphabet subsidiary, has for over a decade awarded its content creators a large share of its advertising revenue.  Since 2019, YouTube has disclosed its payments to its crowd content creators.  Table \ref{tab:rev} below shows the breakout of the total revenues of Alphabet and YouTube in 2017 through 2021. Table \ref{tab:rev} gives us the YouTube annual revenues for 5 recent years.  But how much of these were paid to content creators in each fiscal year?  During August 2021, articles in Fortune \cite{Fortune}, Barron's and numerous other outlets reported that over the previous three years Alphabet/YouTube had paid \$30B to its content creators.  Adding the relevant figures from Table \ref{tab:rev}, this corresponds to a fraction 30/(28.8/2+19.7+15.1+11.1/2) = 30/54.75 = 55\% share of the total revenues during 2nd half of 2018, 2019, 2020, and 1st half of 2021 to content creators.  A more recent article \cite{Screen} puts this ratio at 68\%.  Other figures in the same range are also mentioned in various articles in the past 2 years.  
Putting all of this together, we observe a range of 55\%-68\% of the total YouTube revenues were paid out to content creators in each fiscal year, which is in close agreement with the $\left[\frac{1}{2},\frac{2}{3}\right]$ range we derived under the Shapley allocation using the simple value functions discussed above.  


\begin{table}[h]
\begin{center}
\begin{tabular}{ |c|c|c| } 
\hline
Year & Revenue (\$B) & Revenue (\$B) \\
    & Alphabet       & YouTube \\
\hline
2017 & 110.8 & 8.1 \\ 
\hline
2018 & 136.9 & 11.1 \\ 
\hline
2019 & 161.8 & 15.1 \\ 
\hline
2020 & 182.5 & 19.7 \\ 
\hline
2021 & 257.6 & 28.8 \\ 
\hline
\end{tabular}
\end{center}
\caption{Annual revenues of Alphabet and YouTube, 2017-2021}
\label{tab:rev}
\end{table}

Other video content crowd-sourced systems, such as TikTok, do not publish corresponding data although it is believed they also compensate their content creators a substantial share of their revenues.  Meta/Facebook has also started compensating its content creators\footnote{https://www.facebook.com/business/learn/lessons/how-make-money-facebook.}.

\section{Conclusions and Future Work}
In this paper, we defined value functions which succinctly capture the interplay between a small number of infrastructure providers, founders, and a large number of contributors, crowd participants, in cooperative value-generating settings.  We derived closed form expressions for fair allocation of the total value of such crowed-sourced systems between the founders and the crowd participants using Shapley allocation. We showed that Shapley allocation gives the crowd participants one-half to two-thirds share of the total payoff for the cases of linear and quadratic growth of revenue as a function of crowd size. We further showed similar results hold in cases of oligopolies of such network systems characterized by bilateral agreements between pairs of independent crowd-sourced systems. The same holds true for geographic crowd-sourced systems. These results are in contrast to the practice of many existing crowd-sourced systems whereby participants are awarded tokens of limited value compared to the global payoff of the full system but in close agreement with some others which we documented.  A natural next step would be to consider integration of our proposed value-based allocation in the emerging token-based reward systems which are being actively pursued by standards bodies, including the Web3.0 foundation.
\\
\\
\bibliographystyle{alpha}
\bibliography{references}

\pagebreak
\section{Appendices}
\subsection{Proofs from Section 3}
\label{app3}

\begin{proof}[Proof of Lemma \ref{lem:first}]
We can compute the Shapley value of $g$ as
\[
    \phi_g(\nu) = \frac{1}{|N|!}\sum\limits_{S \subset  N-\{g\}}|S|! (|N|-|S|-1)!\nu(S\cup\{g\})
\]
since $\nu(S) = 0$ as $g \notin S$.
 Since $\nu(S\cup\{g\})$ is just a function of $|S|$, we can reduce the computation as
 
 \begin{align*}
    \phi_g(\nu) &= \frac{1}{|N|!}\sum\limits_{S \subset  N-\{g\}}|S|! (|N|-|S|-1)!f(|S|) \\
    &= \frac{1}{(n+1)!} \sum_{s=0}^{n} {n\choose s} s! (n-s)!f(s) \\
    &= \frac{1}{n+1} \sum_{s=0}^{n} f(s).
 \end{align*}
Since the participants are all identical, we derive the Shapley value as
 
 \begin{align*}
     \phi_{u_i}(\nu) = \frac{\nu(N)-\phi_g(\nu)}{n}.
 \end{align*}
 
\end{proof}

\begin{proof}[Proof of Theorem \ref{th:smmuu-res}]
The Shapley value of participant $i$ can be computed as
\begin{align*}
    \phi_{u_i}(\nu) &= \sum_{S \subset N-\{u_i\}} \frac{|S|!(|N|-|S|-1)!}{|N|!} \left(\nu(S\cup\{u_i\}) - \nu(S) \right) \\
    &= \sum_{S\subset N-\{u_i\}, g \in S} \frac{|S|!(|N|-|S|-1)!}{|N|!}\left(\nu(S\cup\{u_i\})-\nu(S)\right).
\end{align*}

The marginal utility of participant i can be written as
\begin{align*}
    \nu(S\cup\{u_i\}) - \nu(S) = W_i^{2\alpha} + \sum_{j \in S} 2W_i^{\alpha}W_j^{\alpha},
\end{align*}
which gives
\begin{align*}
    \phi_{u_i}(\nu) =& \sum_{S \subset N-\{u_i\}, g \in S} \frac{|S|!(|N|-|S|-1)!}{|N|!} \left(W_i^{2\alpha} + \sum_{j \in S} 2W_i^{\alpha}W_j^{\alpha}\right) \\
    =& \frac{1}{(n+1)!}\sum_{s=1}^{n} s!(n-s)! {n-1\choose s-1} W_i^{2\alpha} \ + \\
    & \sum_{S\subset N-\{u_i\}, g \in S} \frac{|S|!(|N|-|S|-1)!}{|N|!} \left( \sum_{j \in S} 2W_i^{\alpha}W_j^{\alpha}\right).
\end{align*}
The first sum can be simplified directly and the second sum maybe rewritten using Fubini's theorem as follows
\begin{align*}
    \phi_{u_i}(\nu) &= W_i^{2\alpha}\sum_{s=1}^n \frac{s}{n(n+1)} +  \sum_{j \neq i}\frac{2W_i^{\alpha}W_j^{\alpha}}{(n+1)!} \sum\limits_{\{g, u_j\} \subset S \subset N-\{u_i\}} |S|!(|N|-|S|-1)! \\
    &= \frac{W_i^{2\alpha}}{2} + \sum_{j \neq i} 2W_i^{\alpha}W_j^{\alpha}\sum_{s=2}^n \frac{s!(n-s)!}{(n+1)!}{n-2 \choose s-2} \\
    &= \frac{W_i^{2\alpha}}{2} + \sum_{j \neq i} 2W_i^{\alpha}W_j^{\alpha} \sum_{s=2}^n\frac{s(s-1)}{(n+1)(n-1)n} \\
    &= \frac{W_i^{2\alpha}}{2} + \sum_{j\neq i} \left(\frac{2}{3}+O(n^{-1})\right) W_i^{\alpha}W_j^{\alpha}.
\end{align*}
Dividing by the value of the grand coalition $\nu(N)$, we get
\begin{align*}
    \frac{\phi_{u_i}(\nu)}{\nu(N)} &= \frac{1}{2}\frac{W_i^{2\alpha}}{\left(\sum_j W_j^\alpha \right)^2} + \frac{2}{3} \sum_{j \neq i}
    \frac{W_i^{\alpha}W_j^{\alpha}}{\left(\sum_k W_k^\alpha \right)^2} + O(n^{-1})\sum_{j \neq i} \frac{W_i^{\alpha}W_j^{\alpha}}{\left(\sum_k W_k^\alpha \right)^2} \\
    &= \frac{f_i^2}{2} + \frac{2}{3}f_i\sum_{j\neq i}f_j   + O(n^{-1}) = \frac{f_i^2}{2} + \frac{2}{3}f_i(1-f_i) + O(n^{-1}) \\
    &= \frac{2}{3}f_i - \frac{f_i^2}{6} + O(n^{-1}).
\end{align*}
It follows that
\begin{align*}
\lim_{n \to \infty} \frac{\phi_{u_i}(\nu)}{\nu(N)} &= \frac{2}{3}f_i - \frac{f_i^2}{6}, \\
    \lim_{n \to \infty} \frac{\phi_g(\nu)}{\nu(N)} &= 1-\frac{2}{3}\sum_{i}f_i + \sum_i \frac{f_i^2}{6} = \frac{1}{3} + \sum_i \frac{f_i^2}{6}. 
\end{align*}
\end{proof}

\subsection{Proofs from Section 4}
\label{app4}

\begin{proof}[Proof of Theorem \ref{th:network-shap-res}]
 The marginal utility of $u$ to coalition $S$ is
 \begin{align*}
     \nu(S \cup \{u\}) - \nu(S) =& n_u^2 + \sum_{(u, w) \in E, w \in S} n_un_w + \sum_{(w, u) \in E, w \in S} n_un_w \\
     =&  n_u^2 + \sum_{(u, w) \in E, w \in S} 2n_un_w.
 \end{align*}
 We can then compute
 \begin{align*}
    \phi_u(\nu) =& \sum_{S \subset V-\{u\}} \frac{|S|! (|V| - |S| -1)!}{|V|!} \left(\nu(S \cup \{u\}) - \nu(S) \right) \\
    =& \sum_{S \subset V-\{u\}} \frac{|S|! (|V| - |S| -1)!}{|V|!}n_u^2 +\\
    & \sum_{S \subset V-\{u\}} \frac{|S|! (|V| - |S| -1)!}{|V|!} \sum_{(u, w) \in E, w \in S} 2n_un_w  \\
    =& n_u^2 + \sum_{S \subset V-\{u\}} \frac{|S|! (|V| - |S| -1)!}{|V|!} \sum_{(u, w) \in E, w \in S} 2n_un_w.
 \end{align*}
 The second term above can be reduced using Fubini's theorem as we did before to get
 \begin{align*}
    \phi_u(\nu) &= n_u^2 +  \sum_{(u, w) \in E}\sum_{w \in S \subset V - \{u\}} \frac{|S|! (|V| - |S| -1)!}{|V|!}  2n_un_w \\
    &= n_u^2 + \sum_{(u,w) \in E} 2n_u n_w \sum_{w \in S \subset V - \{u\}}\frac{|S|! (|V| - |S| -1)!}{|V|!} \\
    &= n_u^2 + \sum_{(u,w) \in E} 2n_u n_w \sum_{s = 1}^{|V|-1} \frac{s!(|V|-s-1)!}{|V|!} {|V|-2 \choose s-1} \\
    &= n_u^2 + \sum_{(u,w) \in E}2n_u n_w \sum_{s=1}^{|V|-1} \frac{s}{|V|(|V|-1)}\\
    &= n_u^2 + \sum_{w \in N(u)}n_un_w.
 \end{align*}
 
\end{proof}

\begin{proof}[Proof of Theorem \ref{th:network-shap-res1}]
The value function $\xi$ can be split as
\begin{equation}
    \xi(S) = \sum_{v \in V} \xi_u(S) + \sum_{(v,w) \in E} \xi_{u,w}(S)
\end{equation}
where
\[
    \xi_v(S) = \begin{cases}
|S_v|^2 & \text{if } v \in S \\
0 & \text{otherwise},
\end{cases}
\]
and
\[
\xi_{v,w}(S) = \begin{cases}
2|S_v||S_w| & \text{if } v, w \in S \\
0 & \text{otherwise}.
\end{cases}
\]
Each of these value functions can be considered as the value function of a cooperative game in its own right. By Linearity of Shapley value,
\begin{equation}
\label{eq:shapleysum}
    \phi_{a}(\xi) = \sum_{v \in V} \phi_a(\xi_v) + \sum_{(v,w) \in E} \phi_a(\xi_{v,w})
\end{equation}
for any player $a \in N$.

Notice that for any $v \in V$, if $a \notin \{v\} \cup U_v$ then
\[
    \xi_v(S \cup \{a\}) - \xi_v(S) = 0, \quad \text{for all $S \subset N$},
\]
which means $a$ is a null player in the game $\xi_v$. This implies $\phi_a(\xi_v) = 0$ by the null player axiom of Shapley value. The game $\xi_v$ on the remaining set $\{v\}\cup U_v$ is just the single major player game \ref{eq:smmu-model}, and the result of Theorem \ref{th:smmu-res} gives us
\[
    \phi_v(\xi_v) = \frac{n_v^2}{3} + O(n_v), \quad \phi_{u_v}(\xi_v) = \frac{2n_v}{3} - O(1).
\]

Similarly, for any $a \notin  \{v,w\} \cup U_v \cup U_w$
\[
    \xi_{v,w}(S \cup \{a\}) - \xi_{v,w}(S) = 0, \quad \text{for all $S \subset N$},
\]
which implies $\phi_a(\xi_{v,w}) = 0$ by the null axiom of Shapley value. 

All that remains to be computed is the Shapley value of players $\{v,w\} \cup U_v \cup U_w$ with the value function $\xi_{v,w}$. Since we have shown that all other players in this game are dummy, we can find the Shapley value of players $\{v,w\} \cup U_v \cup U_w$ as their Shapley value in the simplified game whose set of players is $\tilde{N} = \{v,w\} \cup U_v \cup U_w$ and value function $\xi_{v,w}$. To do this, we again split the value function $\xi_{v,w}$ as
\[
    \xi_{v,w}(S) = \sum_{u_v \in U_v} \sum_{u_w \in U_w} \xi_{v,w}^{u_v, u_w}(S)
\]
where
\[
    \xi_{v,w}^{u_v, u_w}(S) = \begin{cases}
2 & \text{if } S = \{v, w, u_v, u_w\} \\
0 & \text{otherwise},
\end{cases}
\]
and compute the Shapley value of the players for each of these value functions separately.
Observe that any player $\bar{u_v} \in U_v-\{u_v\}$ or $\bar{u_w} \in U_w-\{u_w\}$ is a null player of the value function $\xi_{v,w}^{u_v, u_w}$, which means
\[
    \phi_{a}(\xi_{v,w}^{u_v,u_w}) = 0, \quad \text{if } U_v \ni a \neq u_v \text{ or } U_w \ni a \neq u_w.
\]

This means we can again reduce the set of players to $\{v, w, u_v, u_w\}$ and value function to $\xi_{v,w}^{u_v,u_w}$, and notice that this value function is symmetric with respect to all four players $\{v, w, u_v, u_w\}$ which implies
\[
    \phi_{a}(\xi_{v,w}^{u_v,u_w}) = \begin{cases}
\frac{1}{2} & \text{if } a \in \{v, w, u_v, u_w\} \\
0 & \text{otherwise}.
\end{cases}
\]
We can now compute the original Shapley values as
\begin{align*}
    \phi_{a}(\xi_{v,w}) = \sum_{u_v \in U_v}\sum_{u_w \in U_w} \phi_a(\xi_{v,w}^{u_v,u_w}).
\end{align*}
When $a = v$ or $a = w$,
\begin{align*}
    \phi_{a}(\xi_{v,w}) = \sum_{u_v \in U_v}\sum_{u_w \in U_w} \frac{1}{2} = \frac{|U_v||U_w|}{2} = \frac{n_vn_w}{2}.
\end{align*}
When $a \in U_v$,
\begin{align*}
    \phi_{a}(\xi_{v,w}) = \sum_{u_v \in U_v}\sum_{u_w \in U_w} \frac{1}{2} \delta_{u_v, a} = \frac{|U_w|}{2} = \frac{n_w}{2}
\end{align*}
and when $a \in U_w$,
\begin{align*}
    \phi_{a}(\xi_{v,w}) = \sum_{u_v \in U_v}\sum_{u_w \in U_w} \frac{1}{2} \delta_{u_w, a} = \frac{|U_v|}{2} = \frac{n_v}{2}.
\end{align*}
Putting everything together in \eqref{eq:shapleysum}, we get the results of equation \eqref{eq:network-shap-res} and \eqref{eq:network-shap-res-minor}. 
\end{proof}

\end{document}